\renewcommand{\@Opargbegintheorem}[4]{%
  #4\trivlist\item[\hskip\labelsep{#3#2\@thmcounterend}]}
\newcommand{\rev}[1]{{#1}}
\newcommand{\sig}{SIGNAL\xspace}
\newcommand{\sigCC}{$\mathit{S\!C\!C}$\xspace}
\newcommand{\RAExtended}{$\text{RA}^{\mathcal{P}}\!$\xspace}
\begin{document}
\title{Expressive Power and Complexity Results for SIGNAL, an Industry-scale Process Query Language}

\titlerunning{Expressive Power and Complexity Results for SIGNAL}
% If the paper title is too long for the running head, you can set
% an abbreviated paper title here
%
\author{Timotheus Kampik\inst{1,2}\orcidID{0000-0002-6458-2252} \and
{Cem~Okulmus\inst{2}\orcidID{0000-0001-7116-9338}}}
\authorrunning{T. Kampik and C. Okulmus}
% First names are abbreviated in the running head.
% If there are more than two authors, 'et al.' is used.
%
\institute{SAP, Berlin, Germany 
 \and Department of Computing Science, {Ume{\aa}} University,  \\ {Ume{\aa}}, Sweden\\
\email{\{tkampik,okulmus\}@cs.umu.se} 
}

\maketitle

%%
%% The abstract is a short summary of the work to be presented in the
%% article.
\begin{abstract}
With the increased adoption of process mining, there is also a need for practical solutions that work at industry scales. In this context, process querying methods (PQMs) have emerged as an important tool for drawing inferences from event logs.
Here, it can be expected that industry approaches differ from academic ones, due to practical engineering and business considerations.
To understand what is at the core of industry-scale PQMs, a formal analysis of the underlying languages can provide a solid foundation.
To this end, we formally analyse SIGNAL, an industry-scale language for querying business process event logs developed by a large enterprise software vendor.
The formal analysis shows that the core capabilities of SIGNAL, which we refer to as the \emph{SIGNAL Conjunctive Core}, are more expressive than relational algebra and thus not captured by standard relational databases. We provide an upper-bound on the expressiveness via a reduction to semi-positive Datalog, which also leads to an upper bound of P-hard for the data complexity of evaluating SIGNAL Conjunctive Core queries.
The findings provide first insights into how (real-world) process query languages are fundamentally different from the more generally prevalent structured query languages for querying relational databases and provide a rigorous foundation for extending the existing capabilities of the industry-scale state-of-the-art of process data querying.
\keywords{Process mining \and Process querying \and Databases}
\end{abstract}

\section{Introduction}
\label{sec:introduction}
The increased industry adoption of process mining requires practical solutions that work at industry scales. Process querying methods (PQMs)  have emerged as central, laying the foundations of industry-scale tools for drawing inferences from the event logs that collect the process traces as recorded by enterprise systems~\cite{kampik2023signal,DBLP:books/sp/22/P2022,DBLP:books/sp/22/0001ABSGK22}. 
Intuitively, these methods provide structured query languages that focus on querying representations common in business process management, such as \emph{graphs} on modalities that are traditionally considered important in the domain, such as \emph{logical time}.
Despite the importance of PQMs in academia and practice, so far and to the best of our knowledge, none of these industry-scale methods have been subject to a detailed formal analysis. This is understandable, since these tools have been developed using a top-down approach, with the need to find practical solution quickly to meet industry demand for process mining insights.  
However, the lack of a formal understanding of these industry-scale PQMs, we argue, leads to an interesting and relevant research challenge.
First, establishing such an understanding firmly answers the question as to how PQMs can theoretically scale with the size of the underlying data, thus showing whether these tools are indeed an efficient and practical solution beyond individual examples.
Second, a complete formal analysis opens the way to more expressive extensions of existing PQMs that are still ``well-behaved'' in terms of computational complexity.
Finally, and most importantly, a formalisation of a PQM allows us to compare its expressive power to other data management approaches and see if PQM methods are indeed adding something new compared to the state of the art in database management, such as relational or NoSQL database systems.
We thus see this work as the first step towards the goal of a formal description of, ideally, all PQMs that are publicly documented and can thus be studied.  There are already numerous  PQMs described in the literature~\cite{DBLP:books/sp/22/P2022}, such as PQL~\cite{POLYVYANYY2024102337}, Celonis PQL~\cite{DBLP:books/sp/22/0001ABSGK22}, and SIGNAL~\cite{kampik2023signal}, as well as several graph-based languages for process querying that have gained---while nascent---substantial research attention in recent years~\cite{DBLP:conf/icpm/Jalali20,DBLP:conf/er/KhayatbashiHJ23,DBLP:journals/jodsn/EsserF21,DBLP:journals/spe/BeheshtiBM18}.

For this paper we focus our attention on SIGNAL, which is actively developed by a large enterprise software vendor in a process mining product\footnote{Here, we can rely on ``openly'' available documentation, cf. \url{help.sap.com/docs/signavio-process-intelligence/signal-guide/syntax} (last accessed at 2024-02-20) -- as well as on a technical report~\cite{kampik2023signal} shared by the vendor.}. We note that our focus on SIGNAL should be understood as providing the necessary starting point in our endeavour to better understand the formal properties of PQMs, and not as the end goal. Still, we argue that focusing on an industry-scale language \emph{is} desirable to start with, as it helps unearth practical considerations emerging from real-world engineering and business requirements.

Thus, our research questions for this work are as follows:
\begin{enumerate}
    \item Which well-known query languages can and cannot match the process querying language SIGNAL with respect to expressive power?
    \item What are the complexity bounds of the process querying language SIGNAL?
\end{enumerate}
Answering Question 1. tells us whether state-of-the-art process data querying requires fundamentally different capabilities than, e.g., querying a traditional relational database using an SQL dialect; answering Question 2. tells us to what extent it is (theoretically) possible to query large amounts of data fast.
\newpage
\paragraph{Our Contributions.}  In this work we present the following results: 
\begin{itemize}
    \item We provide the first formalisation of the conjunctive core of the SIGNAL query language. \rev{This formalisation is based on publicly available documentation and grammar specification; we also had access to a running SIGNAL instance that we could query via a RESTful API.}
    \item We show that the expressive power of the conjunctive core of SIGNAL cannot be captured by relational algebra.
    \item We provide an upper-bound on the data complexity of evaluating the conjunctive core of SIGNAL by a translation to semi-positive Datalog. It is left as an open question if this also forms the lower-bound, which would lead to a P-completeness result.
\end{itemize}
The rest of the paper is structured as follows.
We provide the formal preliminaries of event logs and cases, presented in the context of the relational model, in Section~\ref{sec:preliminaries}.
Based on the preliminaries, we provide a formalisation of the core of SIGNAL in Section~\ref{sec:the_core_of_the_process_mining_query_language_sig}.
We then show that the SIGNAL core cannot be expressed in relational algebra and translate the language to semi-positive Datalog in order to allow for a straightforward theoretical analysis in Section~\ref{sec:bounds}.
The results of our analysis show that the core of SIGNAL is more expressive than relational algebra and not more expressive than semi-positive Datalog; hence, we can also establish that the upper bound in the data complexity is in polynomial time. The paper concludes in Section~\ref{sec:conclusion} with a call for future work that will ideally expand the expressive power and complexity results to other PQMs.

\section{Preliminaries}
\label{sec:preliminaries}
We proceed to give a succinct summary of the relational model. For a more detailed introduction of this and related topics on the foundations of databases, we refer to~\cite{DBLP:books/aw/AbiteboulHV95}. A domain (or attribute) $D$ is a set of values, such as the natural numbers, alphabetic words of various length, and so on. A tuple $t \subseteq D_1 \times \dots \times D_n$ is an $n$-ary combination of values from $n$ many domains. A relation $R \subseteq 2^{D_1 \times \dots \times D_n}$ is a set of tuples. We distinguish between the \emph{instance} of a relation, which is the set described above, and the \emph{schema} of all relations of this type via the set of domains (or attributes) that make up the values in the tuples in the relation instance. When writing about the schema of a relation, we shall equate it simply with the list of its domains. We assume that any reader is familiar with the relational algebra operations, like projection ($\pi$), selection ($\sigma$), various joins, and set operations. For readers that are not familiar with these, we again refer to the excellent textbook from Abiteboul et al.~\cite{DBLP:books/aw/AbiteboulHV95}.
Another concept we need is the notion of conjunctive query (CQ). Formally, a CQ has a body,  made up of relational atoms, and a head, a subset of variables occurring in the body, also called the answer variables:
\[
   q(\mathbf{x}) \leftarrow R_1(\mathbf{y_1}), \dots , R_n(\mathbf{y_n}),
 \]

where we have that $\mathbf{x} \subseteq \bigcup_{1 \leq i \leq n} \mathbf{y_i}$. CQs correspond to SELECT-FROM-WHERE queries in SQL where we only allow equations between attributes in the WHERE clause. Another correspondence is to the positive fragment of first-order logic. Despite its structural simplicity, the problem of answering a CQ---defined as the problem  of finding all mappings between the variables of the body atoms to instances of the respective relations in a given database---is known to be NP-complete. This combination of structural simplicity while still retaining a high complexity for the basic problem of query answering has made CQs an important object of study in database theory. 

\paragraph{Computational Complexity of Query Answering.} For the problem of query answering as defined above, there are three notions of computational complexity. The first is to consider both the query and the database instance to be part of the input -- this gives us the \emph{combined complexity} of query answering.  As discussed, already fairly simple queries, such as CQs, can be hard to answer. However, usually the query is fairly small and thus one might want to investigate how the problem of query evaluation scales with the database  size. In this case we treat the query as a constant. This gives rise to the notion of \emph{data complexity}. And finally, if one only wants to study how the problem scales with the size of the query, one can set the database size to constant and this gives the notion of\emph{ query complexity}.

Due to the prevalence of temporal data in the form of event logs in process mining, we introduce a special timestamp attribute $\mathcal{T}$\!.
For the sake of simplicity, we assume that a timestamp here refers to a natural number (or $0$), representing the UNIX timestamp.

We present here a simplified view on the concepts of event logs and cases, as they play a crucial role in process mining and to understand the query language SIGNAL we wish to present. While there are different notations and terminologies for these concepts in the literature~\cite{DBLP:books/sp/Aalst16}, we will present these concepts in the context of the (flat) relational model, which most database theorists and users will likely be familiar with. 

\begin{table}[t]
\centering
\caption{An event log without case attributes.}
\label{tab:cases}
\begin{tabular}{ ccccc }
\toprule
\textbf{event\_ID} & \textbf{case\_ID} & \textbf{timestamp} & \textbf{event\_name}  & \textbf{status} \\
\midrule
 e0001 & 0001 & 1675086864052 & \texttt{Review request} & \texttt{NEW} \\ 
\midrule
 e0002 & 0002 & 1675147138009 & \texttt{Review request} & \texttt{NEW} \\ 
\midrule
 e0003 & 0001 & 1675160180724 & \texttt{Calculate terms} & \texttt{WIP} \\ 
\midrule
 e0004 & 0002 & 1675213914098 & \texttt{Define terms} & \texttt{WIP} \\ 
\midrule
 e0005 & 0001 & 1675220315296 & \texttt{Prepare contract} & \texttt{WIP} \\ 
\midrule
 e0006 & 0002 & 1675282027657 & \texttt{Prepare contract} & \texttt{WIP} \\ 
\midrule
 e0007 & 0002 & 1675414104525 & \texttt{Send quote} & \texttt{SENT} \\ 
 \bottomrule
\end{tabular}
\end{table}

\begin{definition}[Events and Event Logs]\label{def:Events}
    An \emph{event log} $L$ is a relation with the schema $(\mathit{Eid}, \mathit{Cid}, \mathcal{T}\!, A_1, \dots A_n)$, where the first three attributes are fixed, 
    \rev{namely $\mathit{Eid}$ for the \emph{event id}, $\mathit{Cid}$ for the case id and $\mathcal{T}$ to indicate time}, and we allow an $n$-ary set of attributes for the rest of the event log schema, which we shall call the \emph{event attributes}. We will refer to a single tuple inside $L$ as an \emph{event}. We furthermore define the functional dependencies that hold for any event log $L$: both $(\mathit{Eid},\mathit{Cid})$ and $(\mathit{Cid},\mathcal{T})$ shall be key candidates. Formally, for no two events $e, e' \in L$ where $e \not= e'$, $e = (\mathit{eid}, \mathit{cid}, t, a_1, \dots a_n) $ and $e' = (\mathit{eid}',\mathit{cid}', t', a'_1, \dots a'_n )$ such that $\mathit{cid} = \mathit{cid}'$ it may hold that $t = t'$ or $\mathit{eid} = \mathit{eid}'$. Some further notation: we define the function $\mathit{att}$ to project an event to its event attributes, e.g.: $\mathit{att}(e) = (a_1, \dots, a_n)$. Due to the timestamp attribute $\mathcal{T}$ in events, we can also define a ordering over events \rev{of the same case}: $e \succeq e' $ (resp. $e \succ e'$) holds, if we have $t \geq t'$ (resp. $t > t'$).
\end{definition}

To easily refer to the set of cases in an event log $L$ with schema $(\mathit{Eid}, \mathit{Cid}, \mathcal{T}\!, A_1, \dots A_n)$, we introduce the notation $\mathit{Cases}(L) = \mathit{Cid}$. When we need to refer to values inside events, we will by slight abuse of notation use the attribute names as functions: for $e = (\mathit{eid}, \mathit{cid}, t, a_1, \dots a_n) $, we can then use $\mathcal{T}(e)$ to refer to $t$, $A_1(e)$ to $a_1$, and so on.
 
Let us give a simple example that illustrates how a set of events can be represented.

\begin{example}\label{ex:events}
Consider a quote creation process, e.g., for credits in retail banking.
First, the request is reviewed (\texttt{Review request}).
Then, standard terms are calculated, if applicable (\texttt{Calculate terms}).
Otherwise, custom terms are defined (\texttt{Define terms}).
In either case, the contract is subsequently prepared (\texttt{Prepare contract}) and finally sent out (\texttt{Send quote}).
The events generated during process execution have the following attributes, besides timestamp, case ID, and event ID:
    \emph{event\_name}, giving the event a human-interpretable meaning; and 
    \emph{status}, describing the status change that occurs (i.e., the resulting case status) when the event occurs.
We then have the schema $(\mathit{Eid}, \mathit{Cid}, \mathcal{T}\!, \text{\emph{event\_name}}, \text{\emph{status}})$.
We provide an example for such an event log in \cref{tab:cases}, with events ordered by end timestamp---the assumption that timestamps are unique gives a total order on the set of events.
This simplistic example is missing case attributes, as we will introduce another relation to take care of those. 
\end{example}

\begin{definition}[Cases and Their Events]\label{def:Cases}
    A \emph{case set} C is a relation  $(\mathit{Cid}, B_1, \dots, B_\ell)$, where $(\mathit{Cid})$ is the sole key candidate and $B_1, \dots, B_\ell$ are the \emph{case attributes}.  In other words, this means that no two distinct entries of $C$ may share the same case id; case ids are thus unique across the relation $C$.
    We refer to tuples $c \in C$ as \emph{cases}, and by slight  abuse of notation we shall identify $c$ via its value for $\mathit{Cid}$.
    For a given event log $L$ and case $c \in \mathit{Cid}$, we define its set of  events (or event set) $E_c \subseteq L$ as $\{ e \mid e \in L,  \mathit{Cid(e)} = c  \}$.
\end{definition}

Due to the functional dependency of $L$ on $(\mathit{Cid},\mathcal{T})$, we have that every $e' \in E_c$ has a unique timestamp value for $\mathcal{T}$ and thus $\succeq$ acts as a total order over the elements in $E_c$ for any $c \in \mathit{Cid}$.

\begin{example}\label{ex:cases}
Let us extend the event collection as specified in Example~\ref{ex:events} with case attributes.
Our case set is defined as $(\mathit{Cid}, \text{\emph{customer\_ID}}, \text{\emph{terms}})$.  Due to \cref{def:Cases}, it follows that the attribute \emph{customer\_ID} identifies the case's customer and the attribute \emph{terms} logs the terms that apply for the case (\texttt{standard} or \texttt{custom} terms).
In the event log from Example~\ref{ex:events}, we have two cases; case $0001$ has the \emph{customer\_ID} $C0001$ and \texttt{standard} terms, whereas case $0002$ has the the \emph{customer\_ID} $C0002$ and \texttt{custom} terms.
\end{example}

\paragraph{Nested structures in event logs.} In the literature, and in the implementation of SIGNAL, event logs are in fact nested relations.  
In order to simplify the presentation, we do not assume such a nested relational model for this paper. The first reason is that the (flat) relational model is conceptually easier to grasp. The second reason is that it allows us to compare the expressiveness of SIGNAL with the expressiveness of relational algebra, without any support for nested relations. Because SIGNAL does not allow arbitrary nested structures
this simplification seems justified, as we do not need the full power of nested relational algebras~\cite{DBLP:books/aw/AbiteboulHV95}.

\section{The Conjunctive Core of \sig}
\label{sec:the_core_of_the_process_mining_query_language_sig}

We begin by stating the formal syntax of a subset of \sig. We restrict ourselves to a  fragment of \sig which is already expressive but hopefully still easy to grasp.

\begin{definition}[\sig  Conjunctive Core] 
\label{def:SCC}
To simplify the presentation and focus on the core aspects of the \sig query language, we define a subset of \sig queries, that we shall call \emph{\sig Conjunctive Core}, or \sigCC for short. 
We define the syntax of \sigCC queries in Extended Backus–Naur Form\footnote{Recall that as a reference, the complete syntax for \sig, beyond \sigCC, can be found in the official guide provided at \url{help.sap.com/docs/signavio-process-intelligence/signal-guide/syntax} (last accessed at 2024-02-20).}. 
{ \normalfont \small \begin{center}
    {
        \begin{lstlisting}[language=SQL, 
        commentstyle=\itshape, 
        basicstyle=\ttfamily,escapeinside={(*}{*)},morekeywords={START, MATCHES,BEHAVIOUR}]
<scc>           :=  SELECT <varlist>
                      FROM <Eventlog name>
                    [WHERE <conditions>]
<conditions>    :=  <condition> AND <conditions> | <condition> 
<condition>     :=  <var> = <var>  | <var> = <const>
                    <var> MATCHES <pattern> |
                    BEHAVIOUR <behaviours> MATCHES <pattern> 
<varlist>       :=  <var> |  <var> , <varlist> 
<behaviourCond> :=  <var> = <var> | <behaviourCond> AND <behaviourCond>
<behaviours>    :=  <behaviourCond> AS <var> | <behaviours> , <behaviours> 
<pattern>       :=  <pattern> (*$\rightsquigarrow$*) <pattern> | <pattern> (*$\rightarrow$*) <pattern> | <id> |
                    <pattern>* | ANY | START <pattern> | <pattern> END
<id>            :=  string | <var> | <id> OR <id> | NOT ( <id> )  
        \end{lstlisting}

}
\end{center}}
We assume the set of variables (\texttt{<var>}) and constants (\texttt{<const>}) to be system defined and omit a formal definition.
\end{definition}

\begin{example} \label{ex:SIGCC}
The simplest \sigCC{} queries use only sets of equalities and constant assignments. 

{  \small       \begin{lstlisting}[language=SQL, 
        commentstyle=\itshape,
        basicstyle=\ttfamily]
        SELECT case_id, event_name, event_time
          FROM eventlog 
         WHERE case_id = 2 AND event_name = "package received"
        \end{lstlisting}
}
We can also define pattern matching clauses. 

{  \small       \begin{lstlisting}[language=SQL, 
        commentstyle=\itshape,
      basicstyle=\ttfamily,escapeinside={(*}{*)},morekeywords={MATCHES} ]
        SELECT case_id, event_name, event_time
          FROM eventlog 
         WHERE event_name MATCHES ('package_sent' (*$\rightsquigarrow$*) 'package_accepted')
        \end{lstlisting}
}

Lastly, we give an example for a behaviour pattern formula.

{  \small       \begin{lstlisting}[language=SQL, 
        commentstyle=\itshape,
      basicstyle=\ttfamily,escapeinside={(*}{*)},morekeywords={MATCHES,BEHAVIOUR} ]
    SELECT case_id, event_name, event_time
      FROM eventlog 
     WHERE BEHAVIOUR A AS event_name = 'Review request' AND status = NEW, 
                     B AS event_name = 'Send quote' AND status = SENT
            MATCHES (A (*$\rightsquigarrow$*) B)
        \end{lstlisting}
}
  
\end{example}

As the name of \sigCC{} suggests, we are focusing on all SIGNAL queries that correspond to the well known formalism of \emph{conjunctive queries} in database theory, with the only extension being the ability to capture \emph{patterns} over event sets. In the next section, we will also briefly consider more expressive fragments of \sig{}, such as those that permit to have nested queries (also known as subqueries). 

We proceed to define the set of patterns, beginning with simple patterns that can only refer to values of a single event attribute.

\begin{definition}[Simple Pattern]\label{def:pattern}
We are given an event log $L$.
A simple pattern over $L$ is a pair $\langle A_i, \mathcal{P}_s  \rangle$, where $A_i$ is an event attribute in $L$ and $\mathcal{P}$ is a \emph{simple pattern formula}. 
We first define the notion of \emph{event identifiers}, which may be used in simple pattern formulas. We shall define the set of event identifiers inductively.

\begin{itemize}
    \item Every value $a \in A_i$ is an event identifier. 
    \item If $e$ and $e'$ are event identifiers, then so are: $e \lor e'$, $\mathit{not}( e' )$. 
\end{itemize}
Now we can define the set of simple event patterns.
\begin{itemize}
    \item $\mathit{any}$ is a simple pattern formula.
    \item Every event identifier is a simple pattern formula. 
    \item If $P'$ and $Q$ are simple pattern formulas, then so are: \\ $P' \rightarrow Q$, $P' \rightsquigarrow Q $, ${P'}^*, \mathit{start}(P'),(P')\mathit{end}$. 
\end{itemize}    
\end{definition}
Let us highlight that similar patterns are common in process mining approaches utilising regular languages, such as implementations of DECLARE~\cite{DBLP:conf/caise/CiccioBCM15} and graph-based PQMs (regular path queries)~\cite{DBLP:journals/spe/BeheshtiBM18,DBLP:journals/jodsn/EsserF21}.

\rev{
\begin{example} 
We shall use the schema we presented in~\Cref{ex:events} to give an example of a simple pattern formula. Let us first fix the following constants.\\
$    n_0 = \text{\tt 'Review Request'}  \text{\phantom{word}}  n_1 = \text{\tt'Calculate terms'} \text{\tt\phantom{word}}  n_2 = \text{\tt'Send quote'} $
An example of a simple pattern formula is: 
$p = ( n_0 \rightarrow ((n_1 \lor n_0)
 \rightsquigarrow n_2 )^* \mathit{end})$. 
\end{example}
}

\rev{Simple patterns cannot address conditions over multiple event attributes.}
To overcome this limitation, \sigCC{} also provides patterns that can refer to behaviours over multiple event attributes.  We call these behaviour patterns.

\begin{definition}[Behaviour Pattern]\label{def:complexPattern}
    We are given an event log $L$, with schema $(\mathit{Cid},\mathit{Eid}, \mathcal{T}, A_1, \dots, A_n)$.
    A behaviour pattern over $L$ is a pair $\langle  B, \mathcal{P}_b  \rangle$, where $B$ is a \emph{behaviour matching set} and $\mathcal{P}_b$ is a \emph{behavioural pattern formula}. A behaviour matching is a function $\sigma_x : A_1 \times \dots \times A_n \rightarrow \{ \top, \bot  \} $ from the event attributes to either $\top$ or $\bot$. We can also identify each such function with an identifier $x$. We first define the set of behaviour identifiers: 
    \begin{itemize}
        \item Every value $x$ for some behaviour matching $\sigma_x$, is a behaviour identifier
        \item If $b$ and $b'$ are behaviour identifiers, then so are $b \lor b'$, $\mathit{not}(b)$.
    \end{itemize}
    With this, we can now present the definition of behavioural pattern formulas. 
    \begin{itemize}
        \item $\mathit{any}$ is a behavioural pattern formula.
        \item Every behavioural identifier is a behavioural pattern formula. 
        \item If $P'$ and $Q$ are behavioural pattern formulas,  then so are: \\  $P' \rightarrow Q, P' \rightsquigarrow Q, {P'}^*, \mathit{start}({P'}),(P')\mathit{end}$. 
    \end{itemize} 
\end{definition}
\rev{\begin{example}
To give some intuition for behaviour pattern formulas, we focus on the behaviour matching function, which is the only difference to simple pattern formulas. We fix the schema to the one in \Cref{ex:events}. 
\[
    b_1 = ( \text{\bf status} = \text{\tt NEW} \lor  \text{\bf event\_name} = \text{\tt Review Request}   ) 
\] In $b_1$ we see one natural example of how to fix such functions: we define a logical formula and evaluate it over the tuple in the event log. Any tuple that satisfies it is matched to $\top$, otherwise to $\bot$. We note, however, that \Cref{def:complexPattern} does not impose a specific formalism on behaviour matching functions, as long as  tuples in event logs can be accepted (returning $\top$) or rejected (returning $\bot$).
\end{example} 
}

Note that behaviour patterns are strictly more general than simple patterns, and indeed one could define simple patterns as a special case of behavioural patterns, where the behaviour matching function is only considering the values in a single event attribute. In practice, there are also some limitations: while in \cref{def:complexPattern} we give no limit on the size of the behaviour matching set (i.e. the number of matching functions), the SIGNAL guide -- as of the writing of this paper -- states that only \emph{eight} such functions are allowed at once. We consider this and other such technical deviations to be of little importance to the effort of formalising SIGNAL. 

To define how simple or behavioural patterns  are evaluated, we introduce the concept of a segment, which is an interval inside an event set of a case. 

\begin{definition}[Segment]\label{def:segment}
Given an event log $L$, a case $c \in \mathit{Cases}(L)$ and its event set $E_c  \subseteq L$, a \emph{segment} $s \subseteq E_c$  is a subset of $E_c$ that contains for two time points $t_b, t_e \in \mathcal{T}$ all events $e \in E_c$ such that $t_b \leq \mathcal{T}(e) \leq t_e$, where we require that $\exists e', e'' \in s$ such that $\mathcal{T}(e') = t_b$ and $\mathcal{T}(e'') = t_e$. For simplicity, we can identify $s$ simply by $\langle t_b, t_e \rangle$, and we also introduce some needed notation: $b(s) = t_b$, $e(s) = t_e$.   
\end{definition}

Since segments are simply sets of events, we are free to use set operations on segments and still get segments as output\footnote{There are caveats, though: when using  set minus, an expression $s \setminus s'$,  can only produce a segment in the sense of \cref{def:segment}, if the time interval in $s'$ and $s$ satisfy the Allen's relations~\cite{DBLP:journals/cacm/Allen83} \emph{finishedBy} or \emph{startedBy} or $s = s'$. For set union, we require that the two segments involved have a non-empty intersection. Intersection has no special requirements.}\!. As a special case, we also introduce the empty set segment $s_{\emptyset}$, which will play a technical role in the definition below.

We can now define when a segment satisfies a given simple pattern. 

\begin{definition}[Simple Pattern Segment Satisfaction]\label{def:SimpleSegmentSatisfaction}
Given an event log $L$, a case $c \in \mathit{Cases}(L)$ and its event set $E_c \subseteq L$ and a simple pattern $\langle A_i, \mathcal{P}_s \rangle$ over $L$, we say that a segment $s$ satisfies $\langle A_i, \mathcal{P}_s \rangle$, if the following  holds: 
        \begin{enumerate}
        \item If $\mathcal{P}_s = a \in A_i$, then $\exists e \in s$ s.t. $A_i(e) = a$; 
        \item \label{point:second} if $\mathcal{P}_s = P' \lor Q$, then $s$ satisfies either $\langle A_i, P' \rangle$ or $\langle A_i, Q \rangle$;
        \item if $\mathcal{P}_s = \mathit{not}(P')$, then $s$ must not satisfy $\langle A_i, P' \rangle$; 
        \item if $\mathcal{P}_s = \mathit{start}(P')$, then s satisfies $\langle A_i, P' \rangle$ and $\not \exists e \in E_c $ with $\mathcal{T}(e) < b(s)$;
        \item if $\mathcal{P}_s = (P') \mathit{end}$, then s satisfies $\langle A_i, P' \rangle$ and $\not \exists e \in E_c $ with $\mathcal{T}(e) > e(s)$; 
        \item if $\mathcal{P}_s = \mathit{any}$, then $s$ trivially satisfies $\langle A_i, \mathcal{P}_s \rangle$; 
        \item \label{point:wiggly} if $\mathcal{P}_s = P' \rightsquigarrow Q $, then $\exists s', s'' \subseteq s$ with $e(s') < b(s'')$ and $s'$ satisfies $\langle A_i, P' \rangle$ and $s''$ satisfies $\langle A_i, Q \rangle$; 
        \item \label{point:succ} if $\mathcal{P}_s = P'  \rightarrow Q $, then  $s' \cup s'' = s$ in addition to all conditions from item \ref{point:wiggly};
        \item \label{point:last} if $\mathcal{P}_s = {P'}^*$, then either $\exists s' \subseteq s$ s.t. $s'$ satisfies $\langle A_i, P' \rangle $ and $s \setminus s'$ satisfies $\langle A_i, {P'}^* \rangle$, or we have $s = s_\emptyset$;
    \end{enumerate}  
    Note that in condition \ref{point:succ}, we require that $s'$ and $s''$ partition $s$. In other words, the segment satisfying $P'$ must be directly followed by the one satisfying $Q$. 
\end{definition}

\begin{figure}[t]

    \begin{minipage}{.5\linewidth}
      \captionof{table}{An example event set.}
      \label{tab:event_set}
      \centering
    \begin{tabular}{l | l } 
\toprule
\textbf{event\_name} & \textbf{timestamp} \\
\midrule 
$e_1 =$ \texttt{order\_received } &   $t_1 =$ 15.10.23 \\
$e_2 =$ \texttt{package\_collected } & $t_2 =$ 16.10.23 \\  
$e_3 =$ \texttt{package\_checked } &  $t_3 =$ 17.10.23 \\ 
$e_4 =$ \texttt{package\_sent } & $t_4 =$  23.10.23  \\ 
\bottomrule
\end{tabular}
    \end{minipage}%
    \begin{minipage}{.5\linewidth}
      \centering
        \captionof{table}{Simple pattern formulas.}
        \label{tab:patterns}
\begin{tabular}{l | l } 
\toprule
\textbf{pattern} & \textbf{Satis. segment} \\
\midrule 
 $e_2 \rightsquigarrow e_4$  &   $\langle t_2, t_4 \rangle$\\
$ \mathit{any} \rightsquigarrow e_4 $ & $\langle t_3, t_4 \rangle$\\  
$ \mathit{start}( e_1 \rightarrow e_2 ) $ & $\langle t_1, t_2 \rangle$\\ 
$ e_1 \rightarrow ( e_2  \rightsquigarrow e_4  )^ * $  & $\langle t_1, t_4 \rangle$\\ 
\bottomrule
\end{tabular} 
    \end{minipage}

\caption{ \cref{tab:event_set} contains an event set consisting of events with a single event attribute (\texttt{event\_name}) and a timestamp (shortened and formatted for brevity and human readability). \cref{tab:patterns} contains four simple pattern formulas and their satisfying segments over the event set of \cref{tab:event_set}.}
\label{fig:event_set}
\end{figure}

\begin{example}[Segment Examples] \label{ex:Segments}
In \cref{fig:event_set} we give an example of satisfying segments. For the sake of brevity, we only provide an event set, for some undefined  case, and thus omit the definition of a complete event log. 
\end{example}

We proceed to give the analogous definition for behavioural patterns. We shall reuse most of the conditions from \cref{def:SimpleSegmentSatisfaction}, as the structure of behavioural pattern formulas is mostly the same as for simple pattern formulas, except for the use of behavioural matching functions.

\begin{definition}[Behavioural Pattern Segment Satisfaction]\label{def:BehaviourSegmentSatisfaction}
    Given an event log $L$, a case $c \in \mathit{Cases}(L)$ and its event set $E_c  \subseteq L$ and a behavioural pattern $\langle B, \mathcal{P}_b \rangle$ over $L$, we say that a segment $s$ satisfies $\langle B, \mathcal{P}_b \rangle$, if the following conditions hold: 
        \begin{enumerate}
        \item If $\mathcal{P}_b = x$ where $\sigma_x \in B$, then $\exists e \in s$ s.t. $ \sigma_x(\mathit{att}(e)) = \top$; 
    \end{enumerate}
        in addition to conditions \ref{point:second} to \ref{point:last} from \cref{def:SimpleSegmentSatisfaction}.
\end{definition}

With this machinery, we can now clearly define how an \sigCC{} query is to be evaluated. For the parts of \sigCC{} that correspond to regular SQL, we use the familiar relational semantics. This only leaves the issue of how to deal with patterns. Informally speaking, patterns serve to filter out certain cases from our event log, where patterns are evaluated against the event set of each case individually. If no satisfying segment according to \cref{def:SimpleSegmentSatisfaction} (or resp. \cref{def:BehaviourSegmentSatisfaction}) exists, then all events of this case are to be removed. 

Our plan is to translate \sigCC{} to an extended form of RA, by introducing a new selection operator that takes as input an event log, and is given a (simple or behavioural) pattern as a parameter. Analogous to normal selection, it will filter out parts of the input table, namely those events in our event log that are from cases that do not have a satisfying segment for the pattern. 

\begin{definition}[Pattern Selection Operator]\label{def:PatternSelection}
    Given a simple  or behavioural pattern $\mathcal{P}$, and an event log $L$, we  define a \emph{pattern selection operator} $\sigma_{\mathcal{P}}$ as follows: 
    \[
        \sigma_{\mathcal{P}}(L) = \{  e \mid e \in L  \land E_{\mathit{Cid}(e)} \subseteq L \text{ has satisfying segment for } \mathcal{P}      \}
    \]
\end{definition}

We shall call the extension of RA with this new operator simply \RAExtended.  Formally, from the definitions for patterns and segment satisfaction, combined with \cref{def:PatternSelection}, we get the following corollary. 

\begin{proposition} \label{col:raExt}
Given an \sigCC{} query $q$, there exists an \RAExtended expression $\varphi$, s.t. for every event log $L$ it holds that:
\[
     \varphi(L) \equiv q(L).
 \] 
\end{proposition}

Note that in \cref{col:raExt} above,  we present both the query and the relational algebra expression as a function from an event log to a set of events. 

\bigskip

The translation of \sigCC{} to \RAExtended follows almost exactly the translation of CQs to RA -- recall that we selected \sigCC{} as a subset of SIGNAL that structurally matches CQs, with the sole addition of \texttt{\textbf{MATCHES}} clauses with pattern formulas. As we consider the translation of CQs to RA to be folklore, we omit technical details, and simply give an example below. 

\begin{example}[\sigCC{} to \RAExtended] \label{ex:RAE}
Consider the following \sigCC{} query from \cref{ex:SIGCC}:

        \begin{lstlisting}[language=SQL, 
        commentstyle=\itshape,
        basicstyle=\ttfamily,escapeinside={(*}{*)},morekeywords={MATCHES} ]
    SELECT case_id, event_name, event_time
      FROM eventlog 
     WHERE event_name MATCHES ('package_sent' (*$\rightsquigarrow$*) 'package_accepted')
        \end{lstlisting}

The query corresponds to the following \RAExtended{} expression:
\[
     \pi_{\text{\texttt{case\_id},\text{\texttt{event\_name}},\text{\texttt{event\_time}}}}\big ( \sigma_{\mathcal{P}}(\text{\texttt{eventlog})} \big )
\]
where $\mathcal{P} =\langle \text{\texttt{event\_name}}, \text{\texttt{package\_sent}} \rightsquigarrow \text{\texttt{package\_accepted} }  \rangle $.

\end{example}
\rev{
\paragraph{SIGNAL Queries with Nesting and Aggregation.}
In this section, we focused deliberately on a subset of \sig{}
which already captures its abilities to express complex patterns over event sets. 
We limit our study to such simple queries and deliberately leave out more expressive fragments, such as SIGNAL queries with nesting or the ability to aggregate over groups of attributes. The primary reason for this is that we believe that the ability to express patterns over event logs is already enough to set SIGNAL apart from standard relational algebra, as we shall show in the rest of this paper. Exploring how nesting further adds expressive power and affects data complexity is left for future work.
}

\section{Towards Lower and Upper Bounds on the Data Complexity of \emph{SCC} Evaluation}
\label{sec:bounds}

After having presented the semantics of \sigCC{} in \cref{sec:the_core_of_the_process_mining_query_language_sig}, we next consider how it fits into the complexity landscape of existing query languages. We will first show that \sigCC{} is indeed more expressive than relational algebra, and cannot be captured by it without the use of extensions, as were introduced in \cref{sec:the_core_of_the_process_mining_query_language_sig}. Next, we provide some preliminary work towards an exact characterisation of the evaluation problem by stating an upper bound via a reduction from \sigCC{} to semi-positive Datalog.

\subsection{Inexpressibility of \emph{SCC} in Relational Algebra}

In \cref{sec:the_core_of_the_process_mining_query_language_sig} we provide a translation of \sigCC{} to an extension of RA. It remains to show that this extension is actually necessary, i.e., that it is beyond the expressive capabilities of regular RA to capture the semantics of \sigCC{}.
We can achieve this by largely relying on well-known ``text-book''-level results.

\begin{theorem}[Inexpressibility of \sigCC{} in RA]\label{thm:inex}
    There exists an \sigCC{} query $q$, such that it is not possible to find an RA expression $\varphi$ for which we have that: 
    \[
        q(L) \equiv \varphi(L)
    \] for any given event log $L$.
\end{theorem}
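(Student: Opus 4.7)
The plan is to exhibit a specific \sigCC{} query $q$ whose evaluation on a carefully chosen family of event logs encodes the parity of the size of a case's event set, and then invoke the classical inexpressibility of parity in first-order logic with a linear order. Since RA over an ordered domain is subsumed by FO with~$<$, this yields the desired separation.

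Concretely, let $q$ be the \sigCC{} query that selects all events from \texttt{eventlog} whose \texttt{event\_name} matches the pattern $\mathit{start}\bigl(((a \to a)^*)\,\mathit{end}\bigr)$, where $a$ is a fixed constant. By the conditions of \cref{def:SimpleSegmentSatisfaction}, any satisfying segment $s$ must have $b(s)$ equal to the minimum timestamp of $E_c$ (from $\mathit{start}$) and $e(s)$ equal to the maximum (from $\mathit{end}$); hence $s = E_c$. The remaining condition then demands that $E_c$ admit a partition into consecutive pairs, each matching $a \to a$. Thus $q(L)$ returns exactly the events of cases whose event set consists of an even number of events, all named $a$. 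For each $n \in \mathbb{N}$, let $L_n$ be the event log with a single case containing $n$ events all named $a$ and with strictly increasing timestamps; then $q(L_n) = L_n$ if $n$ is even and $q(L_n) = \emptyset$ otherwise.

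Suppose for contradiction that some RA expression $\varphi$ satisfies $\varphi(L) \equiv q(L)$ for every event log $L$. Then the Boolean query ``$\varphi(L_n)$ is nonempty'' is definable by an FO sentence $\Psi$ over the schema of $L$ together with the order $\leq$ induced by the timestamp attribute. However, up to the constant non-timestamp attributes, $L_n$ is isomorphic to the finite linear order $([n], \leq)$, and a standard Ehrenfeucht--Fra\"iss\'e argument shows that for every quantifier rank $k$ there exist $n, m \geq 2^{k}$ of opposite parity for which the Duplicator wins the $k$-round game between $([n], \leq)$ and $([m], \leq)$. Hence $\Psi$ cannot distinguish $L_n$ from $L_m$, contradicting the fact that $q(L_n)$ and $q(L_m)$ differ in emptiness.

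The main obstacle is verifying that, under the minimality clause of \cref{def:SimpleSegmentSatisfaction}, the pattern $\mathit{start}(((a \to a)^*)\,\mathit{end})$ really captures ``$E_c$ is all-$a$ and $|E_c|$ is even'': one has to check that the outer $\mathit{start}/\mathit{end}$ modifiers pin the witness segment to $E_c$ so that the minimality requirement does not collapse the inner $(\cdot)^*$ back to the empty segment. The remainder is classical; the only additional care needed is to confirm that admitting timestamp comparisons in RA selections does not provide enough extra structure on the single-case logs $L_n$ to define parity.
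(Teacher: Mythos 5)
Your proposal is correct and follows essentially the same route as the paper: a $\mathit{start}$/$\mathit{end}$-anchored, starred arrow pattern forces the satisfying segment to be all of $E_c$ and to decompose into consecutive pairs, so the query decides the parity of the event set's size, which is then ruled out for RA via Codd's theorem and an Ehrenfeucht--Fra\"iss\'e argument. Your version is in two respects tidier than the paper's: by evaluating on single-case logs $L_n$ you sidestep the paper's rather informal step of merging all cases into one event set (by rewriting \texttt{case\_id} and timestamps), and by playing the EF game on finite linear orders with the $2^k$ threshold you correctly account for the order induced by timestamps, which the paper's ``Duplicator simply imitates the Spoiler'' sketch glosses over.
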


\begin{proof}[Proof of \cref{thm:inex}]
Our argument will make use of Codd's Theorem~\cite{DBLP:persons/Codd72}, which states that the expressive power of relational algebra is the same as the domain-independent domain calculus. This in turn is equivalent in expressive power to first-order logic (FO). Hence, all we need to show is that the semantics of \sigCC{} cannot be captured in FO. To this end, we make use of the fact that FO cannot recognise \emph{evenness} in its models, i.e., models which only have an even number of elements. In the context of the relational model, this would correspond to the number of tuples in a relation.  For an easy to follow proof of this inexpressibility, we refer interested readers to~\cite{DBLP:books/sp/Libkin04}, where the desired result is captured by Proposition 3.3.
What is left to show is that there exists some \sigCC{} query that can express evenness. Specifically, this means that it will distinguish between event logs of even size and those that have an uneven size. We will make use of simple pattern matching for this, but we first need to establish that in \sigCC{} we can concatenate the events of all cases into one, single event set. This is possible by simply projecting out the \texttt{case\_id}, and replacing it with a single ID across the event log.
 This gives us a new event log, with the same number of events but where all events share the same \texttt{case\_id}. There are some technical complexities here, such as making sure the timestamps are still unique.
 We can accomplish this by also replacing the timestamps with stand-holder values, using for example the row number to indicate number of seconds. Afterwards, we filter this new event log by a simple pattern formula : $\mathit{start}(\mathit{any} \rightarrow \mathit{any})^*\mathit{end}$. We now have two possible outputs: either the event set (which corresponds to all events in this event log) has even size and thus has a satisfying segment: in this case we return every event in our event log. Alternatively, if the event log is of uneven size and thus permits no satisfying segment, we return nothing. Note the use of the $\mathit{start}$ and $\mathit{end}$ constructors, which are critical to capture this property: they ensure that the satisfying segment must span the entire event set.  \qed
\end{proof}

As a minor comment to the proof above, it should be noted that most implementations of SQL can effectively capture evenness via basic arithmetic. However, from a complexity standpoint, SQL with arithmetic is trivially undecidable, just as FO extended with (Peano) arithmetic is undecidable, so equating SQL with RA is sensible when talking about computational complexity.

\paragraph{On the role of the Kleene star in pattern formulas.}  To give some more insights into which components of a pattern formula are responsible for its expressive power, let us consider a subset of simple pattern formulas that do not involve the ${}^*$ constructor. In this case, patterns could not express repetitions, and every pattern formula would simply correspond to some fixed ordering over the event set. While we do have disjunction and negation, they are limited to event identifiers; more complex pattern constructors, such as $\rightarrow$, do not appear inside disjunctions. Thus, if we consider a pattern formula without the Kleene star, we could rewrite it into an FOL formula which captures the fixed ordering expressed in it, and so this fragment would stay within the expressive power of RA. Thus, it is not a coincidence that the counter-example we use in the proof above makes use if it, it is the crucial component that allows us to escape capture by FOL.
Let us also highlight that the Kleene star is apparently central to well-established approaches to reasoning about event log data. In particular DECLARE, a declarative language for process mining frequently used and studied by academics, features the Kleene star in all of its common expression templates when expressed in POSIX regular expressions~\cite{DBLP:conf/caise/CiccioBCM15} and many of the most central DECLARE expressions feature the Kleene star when translated to SIGNAL~\cite{DBLP:conf/bpm/BergmannRK23}.

Let us note that in contrast, the \verb|any| constructor is \emph{not} at the heart of the problem.
Intuitively, the example we use in the proof --  $\mathit{start}(\mathit{any} \rightarrow \mathit{any})^*\mathit{end}$ -- is equivalent to the regular expression \verb|^(..)*$|, and hence can be verified with a run-off-the-mill regular expression solver.
Analogously, $\mathit{start}(\mathit{`a`} \rightarrow \mathit{`a`})^*\mathit{end}$ (\verb|^(aa)*$| as a standard regular expression) still provides the needed counter-example, albeit not for \emph{any} event log, but for a specific event containing only events with name ``a''; for SIGNAL, the example can be trivially generalised to behaviours that match every event in an event log.

%!TEX root =  main.tex

% \vspace{-5mm}
\subsection{Translation of \emph{SCC} to Semi-Positive Datalog } \label{sec:translation_of_sigcc_}
\vspace{-2mm}
Following the result that RA itself cannot capture every \sigCC{} query, we next want to find an existing formalism that does capture the semantics of \sigCC{} and thus get an upper bound on the complexity of evaluating \sigCC{}.
We proceed to show the following result.

\begin{theorem}[Reduction of \sigCC{} to Semi-Positive Datalog] \label{thm:RedDL}
    Given a \sigCC query $q$, and an event log $L$, we can define a semi-positive Datalog program $P$ \rev{with a linear ordering over timestamps}, such that: 
    \[ 
        q(L) \equiv P(L),
    \]
  where we understand $P(L)$ to be the output (generated tuples) of the Datalog program $P$ when evaluated with facts from the event log $L$. 
\end{theorem}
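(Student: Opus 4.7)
The plan is to proceed by exploiting the translation from \sigCC{} to \RAExtended{} already established by \cref{col:raExt}. Given a query $q$, we first obtain its \RAExtended{} expression $\varphi$, and then translate $\varphi$ into a semi-positive Datalog program $P$ operator by operator. The classical fragment of relational algebra (projection, equality selection, joins, and set operations) admits a folklore translation to non-recursive conjunctive Datalog rules, which trivially lies inside the semi-positive fragment. Since the final output $q(L)$ is a set of events of $L$, the resulting program has a distinguished goal IDB predicate over the event schema. The real work lies in translating the pattern selection operator $\sigma_P$ of \cref{def:PatternSelection}.

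For $\sigma_P$ I would first introduce auxiliary predicates that expose the per-case ordering induced by $\mathcal{T}$. In particular, an immediate-successor predicate $\mathit{next}(c,e,e')$ holds when $e,e' \in E_c$, $\mathcal{T}(e) < \mathcal{T}(e')$, and no event of $E_c$ lies strictly between them; this rule requires a negated body atom over the event log $L$, which is admissible because $L$ is an EDB relation. Next, for each event identifier $A$ appearing in $P$, I put $A$ into disjunctive normal form over the base strings so that matching can be captured by a predicate $\mathit{Match}_A(c,e)$ whose defining rules use negation only over equalities in EDB facts. Then, for each subpattern $P'$ of $P$, I introduce an IDB predicate $\mathit{Sat}_{P'}(c,t_b,t_e)$ expressing that segment $\langle t_b,t_e\rangle$ of case $c$ satisfies $P'$ in the sense of \cref{def:SimpleSegmentSatisfaction} (resp. \cref{def:BehaviourSegmentSatisfaction}). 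The rules are defined by structural induction on $P'$: the atomic case invokes $\mathit{Match}_A$; the cases $\rightsquigarrow$, $\rightarrow$, $\mathit{start}$, $(\cdot)\mathit{end}$, and $\mathit{any}$ become conjunctive rules quantifying over the relevant split points $t_b, t_m, t_e$, with $\rightarrow$ pinned to $\mathit{next}$ to enforce the partition requirement of point~\ref{point:succ}; and $A^*$ is encoded by two recursive rules, one for the empty-segment base case and one that concatenates a minimal $A$-witness with an $A^*$-tail via $\mathit{next}$. Behavioural patterns are handled by adding, for each matching function $\sigma_x$, a conjunctive IDB predicate $B_x(\mathit{att})$ over event attributes and then reusing the pattern machinery. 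Finally, $\sigma_P$ is realised by a rule that outputs an event $e$ whenever $\mathit{Sat}_P(\mathit{Cid}(e),t_b,t_e)$ holds for some $t_b, t_e$.

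The main obstacle I anticipate is the \emph{minimality requirement} trailing \cref{def:SimpleSegmentSatisfaction}, namely that $s$ must be the smallest segment meeting the listed conditions. A literal reading of ``there is no smaller satisfying segment'' would demand negation over the IDB predicate $\mathit{Sat}$, which the semi-positive fragment forbids. To circumvent this, I would observe that, for the outer selection operator, only the \emph{existence} of a satisfying segment matters -- if any segment satisfies $P$, then the minimum-cardinality one does, by well-foundedness of $|\cdot|$ -- so the minimality can be dropped at the top level. For the recursive sub-uses inside $\rightarrow$ and $A^*$, I would redesign the rules so that witnesses are constructed tight-by-construction: concatenation operators build segments only through $\mathit{next}$-adjacent splits, and $A^*$ iterates over non-empty minimal $A$-chunks determined by the recursion itself. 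A careful case analysis should confirm that this rewriting gives rise to the same answer relation as the original minimality-based semantics. Combining all the pieces yields a semi-positive Datalog program $P$ with $P(L) \equiv q(L)$, as claimed.
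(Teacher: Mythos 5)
Your proposal follows essentially the same route as the paper's proof: a structural-induction translation of each subpattern into an IDB predicate over (case, segment-start, segment-end), recursive rules for $A^*$, adjacency for $\rightarrow$ enforced by negating the existence of an intermediate event (negation over the EDB event relation only, hence semi-positive), confinement of pattern negation to event identifiers, and a single conjunctive rule for the non-pattern part of the query. The only substantive difference is that you explicitly confront the minimality condition trailing \cref{def:SimpleSegmentSatisfaction} --- which the paper's proof silently elides --- and your observation that minimality may be dropped at the top level (since only existence of a satisfying segment matters for $\sigma_P$) while adjacency is built in tight-by-construction is, if anything, a more careful treatment of that point than the paper provides.
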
 
\begin{proof}[Proof of Theorem \ref{thm:RedDL}] 
To simplify the presentation of the Datalog program, we make use of the functional dependencies of event logs, and identify each event with just the pair of its event id and case id, $\mathit{Eid}, \mathit{Cid}$, in addition to the timestamp as it plays a crucial role in capturing the semantics of patterns. We shall make use of the following extensional predicates. The first  is $\mathit{event}(c,e, t)$, where the three arguments are, respectively, the case id, the event id, and the timestamp. \rev{We encode with $\mathit{start}(c,t)$ (resp. $\mathit{end}(c,t)$) the timestamp of the chronologically first (resp. last) event in the event log of the case with id $c$.} We also have the ability to refer to individual attribute values via predicates of the form $A_i(c,e,a)$, which indicate that an event with case id $c$ and event id $e$ has value $a$ for attribute $A_i$. \rev{To ensure safety of any rules with negation, we require predicates $\mathit{time}$ and $\mathit{events}$ which identify, resp., all timestamps and event ids.}

\newcommand{\translate}[1]{[#1]^D}

\newcommand{\dl}[1]{ P_{#1}}

\newcommand{\seg}[1]{\mathit{segment}(#1)}

\newcommand{\eve}[1]{\mathit{event}(#1)}
\newcommand{\eves}[1]{\mathit{events}(#1)}
\newcommand{\timess}[1]{\mathit{time}(#1)}
\newcommand{\attr}[2]{\mathit{A}_{#1}(#2)}
\newcommand{\start}[1]{\mathit{start}(#1)}
\newcommand{\ed}[1]{\mathit{end}(#1)}

% Note that for two of the translations below we introduce multiple ``options''. This choice operator signifies that we produce two rules in the translation, instead of one, to encode disjunction. The new rule thus created shall use the same head, and rest of the body as the first one, and only differ in the content of the choices listed. 

In order to capture when any given simple pattern has a satisfying segment, we use an intensional predicate  $\dl{\varphi}(t_s,t_e,c)$, which captures the satisfying segments $(t_s,t_e)$ of a given pattern formula $\varphi$ inside the events set $E_c$. We use the convention of  using  upper case for variables and lower case for constants. %Please note that for variables, we shall use the convention of using upper case letters, and for constants use lower case letters. 
Here are the rules for intensional predicates $\dl{\varphi}$:
% \vspace{-5mm}
% We shall first introduce a translation from simple pattern formulas into parts of Datalog programs. More specifically, we translate simple pattern formulas into parts of the body of Datalog rules. The idea here is that these parts of a Datalog program encode the existence (or non existence) of certain segments in the event log. We use shall use the notation $\translate{p}$ for a given pattern formula $p$. Note that the resulting set of body predicates will \emph{always} start with a leading $\mathit{segment}$ predicate, which in term has two timestamps as variables. Furthermore, we will make use of the shorthand $\translate{p}(S,E)$ in the translation of a pattern formula $p$ to indicate that $S$ and $E$ are two timestamp variables that occur inside the translated body in its respective leading $\mathit{segment}$ predicate. 
% \bigskip
\addtocounter{footnote}{1}
\footnotetext[3]{Note that negation in simple pattern formulas can only occur for event identifiers so that in the translation, we need not consider the negation of more complex patterns.}
{ \small \begin{flalign*}
    % \translate{a \in A_i} &= \seg{T,T,C}, \eve{C,E,T},  \attr{i}{C,E,a} \\
    \dl{a }(T,T,C) &\leftarrow \seg{T,T,C}, \eve{C,E,T},  \attr{i}{C,E,a} \\
    % \neg \translate{a \in A_i} &= \seg{T,T,C}, \eve{C,E,T},  \neg \attr{i}{C,E,a} \\
    \dl{\neg a}(T,T,C) &\leftarrow \seg{T,T,C}, \eve{C,E,T},  \neg \attr{i}{C,E,a} \\
    % \translate{ A \lor B } &=  \translate{ A  } \lor \translate{B} \\ 
    \dl{ A \lor B }(T_s,T_e,C) &\leftarrow  \dl{ A  }(T_s,T_e,C) \lor \dl{B}(T_s,T_e,C) \\ 
    % \neg  \translate{ A \lor B } &=  \neg \translate{ A  },  \neg \translate{B} \\ 
     \dl{ \neg (A \lor B) }(T_s,T_e,C) &\leftarrow  \dl{ \neg A  }(T_s,T_e,C),  \dl{\neg B}(T_s,T_e,C) \\ 
    % \translate{\neg A} &= \neg \translate{A}  \text{ \itshape \text{\footnotemark[3]}} \\
    \dl{\neg \neg A}(T_s,T_e,C) &\leftarrow \dl{A}(T_s,T_e,C)  \text{ \itshape \text{\footnotemark[3]}} \\  
    % \translate{\mathit{start}(A)} &= \translate{A}(T_s,T_e), \neg \eve{C,E,T'}, T' < T_s  \\
        \dl{\mathit{start}(A)}(T_s,T_e,C) &\leftarrow \dl{A}(T_s,T_e,C), \start{C,T_s} \\
    % \translate{(A)\mathit{end}} &= \translate{A}(T_s,T_e), \neg \eve{C,E,T'}, T' > T_e   \\
    \dl{(A)\mathit{end}}(T_s,T_e,C) &\leftarrow \dl{A}(T_s,T_e,C), \ed{C,T_e}   \\
    % \translate{A \rightsquigarrow B} &= \seg{T_s, T'_e, C}, \translate{A}(T_s, T_e), \translate{B}(T'_s, T'_e), T_e < T'_s   \\ 
    \dl{A \rightsquigarrow B}(T_s,T'_e,C) &\leftarrow \seg{T_s, T'_e, C}, \dl{A}(T_s, T_e,C), \dl{B}(T'_s, T'_e,C), T_e < T'_s   \\ 
%   \translate{A \rightarrow B} &= \seg{T_s, T'_e, C}, \translate{A}(T_s, T_e), \translate{B}(T'_s, T'_e), T_e < T'_s, \\ 
 % & \phantom{ = \text{ } } \neg  \eve{C,E,T''}, T_s < T'', T'' < T'_s \\
    \dl{A \rightarrow B}(T_s,T'_e,C) &\leftarrow \seg{T_s, T'_e, C}, \dl{A}(T_s, T_e,C), \dl{B}(T'_s, T'_e,C), T_e < T'_s   \\ 
 & \phantom{ = \text{ } } \neg  \eve{C,E,T''}, T_s < T'', T'' < T'_s, \timess{T''}, \eves{E} \\
\dl{A^*}(T_s,T'_e,C) &\leftarrow \seg{T_s,T'_e,C} \lor \big (    \dl{A}(T_s,T_e,C), \dl{A^*}(T'_s,T'_e, C), T_e < T'_s, \\
 & \phantom{ = \text{ } } \neg \eve{C,E,T''}, T_e < T'', T'' < T'_s, \timess{T''}, \eves{E} \big )
\end{flalign*}
% \todo{consistently use uppercase for variables, lowercase for constants}
In order for this translation to capture the intended meaning, we also require the following two rules to be included in the final Datalog program.
\begin{flalign*}
    \seg{T,T,C} &\leftarrow \eve{C,E,T} . \\ 
    \seg{T_s,T'_e,C} &\leftarrow  \seg{T_s, T_e, C}, \seg{T'_s, T'_e, C},  T_e < T'_s, \\ 
    & \phantom{ = \text{ }} \neg \eve{C, E , T''}, T_e < T'', T'' < T'_s , \eves{E},\timess{T''}
\end{flalign*}}
% \vspace{-2mm}
Note that in the translation above, we introduce a disjunction connective in the body of Datalog rules. This is simply ``syntactic sugar''.
{ \small\[
    \mathit{Atom}_1(A,B,C) \leftarrow \big (  \mathit{Atom}_2(A,B) \lor \mathit{Atom}_3(A,B) \big ), \mathit{Atom}_4(B,C) .
\]}
is a space-efficient way to encode the following: 
{\small\begin{flalign*}
            \mathit{Atom}_1(A,B,C) \leftarrow \mathit{Atom}_2(A,B), \mathit{Atom}_4(B,C) . \\
        \mathit{Atom}_1(A,B,C) \leftarrow \mathit{Atom}_3(A,B), \mathit{Atom}_4(B,C) . 
\end{flalign*}}
Equipped with these tools, we shall now present the translation of \sigCC{} into Datalog.
Let us first consider the case of \sigCC queries without any pattern matching. We can then model the query as a conjunctive query and the translation to Datalog is trivial: as we only have a single table---the event log---and can only join its event attributes or set them to constants, we can express the query inside a single Datalog rule: 
{ \small \[
    \mathit{Output}(X_1, \dots X_n) \leftarrow  \eve{C,E,T}, A_1(E,V_1), \dots A_n(E,V_\ell), V_1 = c_1, \dots, V_m = c_m .
\]}where $\bigcup_i X_i \subseteq \{ C,E,T, V_0, \dots, V_\ell  \}$ are the output variables, as determined in the SELECT clause. The joining of two event attributes is done by simply using the same variable symbol and when specifying the attribute via $A_i$. 

As such, we need only consider how to deal with pattern matching. For this, we  make use of the intensional predicate $\dl{}$:  for any given pattern formula $\varphi$, we add $\dl{\varphi}$ to the body of the rule above.  For every subexpression $\phi$ that occurs inside $\varphi$, we also add the rule $P_\phi$ to our Datalog program.
% As this translation gives us a list of predicates, we simply attach them to the single Datalog rule which acts as the base of our reduction. 
% Note that after this step, we might have more than one Datalog rule, as the translation function $\translate{\cdot}$\! can introduce disjunction in the body, which as we explained, is just shorthand for ``splitting'' a rule into two copies with differing predicates for predicates inside the disjunction. 

Given \cref{def:SimpleSegmentSatisfaction} and the construction above, it should not be difficult to see that our translation of simple patterns captures the intended meaning exactly. Each rule for a given pattern is intended to specify all the necessary conditions for the satisfying segment of that pattern to be fulfilled. We leave out the case for behaviour patterns, as it is analogous, safe that instead of event identifiers, we have  behaviour identifiers. 

What is left is to show is that our produced Datalog program after applying this translation, falls into the class of semi-positive Datalog. We note that negation can only occur in the translation of  patterns and in this translation we only negate the EDB predicates $\mathit{event}$  and $\mathit{A}_i$, which come from the event log.  \qed
\end{proof}

 The reduction to semi-positive Datalog to capture \sigCC{} gives us an upper bound on the data complexity of evaluating \sigCC{} queries, namely P.
 %Furthermore, as semi-positive Datalog also captures P itself, 
This would directly lead to a completeness result, if one could show that the lower bound for \sigCC{} evaluation is also in P, in other words, that the bound is tight. 
% \vspace{-2mm}
\paragraph{Open Question.} \emph{Is the data complexity of \sigCC{} evaluation P-hard?}
\vspace{-2mm}

\section{Conclusion \& Future Work}
\label{sec:conclusion}
We formally presented and analysed a subset of the  SIGNAL query language, which forms the heart of a commercial process mining software product.
We provided the intended semantics for the pattern matching formulas that are used in SIGNAL to then show that SIGNAL cannot be captured in relational algebra.
Also, we conducted a preliminary analysis of the data complexity of the problem of evaluating \sigCC{} queries, establishing an upper bound of P-hard.
The findings highlight that there is more to process querying methods than ``just relational algebra'' and indicate that the temporal reasoning capabilities, which increase expressive power and raise the complexity bound, may indeed be good motivators of domain-specific implementations. Future work could answer the question posed in this paper regarding the tight upper bound of the SIGNAL core and conduct analogous analyses of other process querying methods such as PQL~\cite{DBLP:books/sp/22/Polyvyanyy22a}. For future work, we also leave the formal analysis of new process mining approaches such as object-centric process mining~\cite{DBLP:conf/caise/LiMCA18}, graph-based process querying~\cite{DBLP:conf/icpm/Jalali20,DBLP:conf/er/KhayatbashiHJ23,DBLP:journals/jodsn/EsserF21,DBLP:journals/spe/BeheshtiBM18}, or 
the modern wide-column store and OLAP-style systems (such as Apache Arrow) that underlie SIGNAL.
Another promising future research direction is the integration of our results with expressive power analyses of ``academic'' declarative process querying approaches, such as behavioural profiles~\cite{10.1007/s00165-016-0372-4} and DECLARE~\cite{DBLP:conf/bpm/GiacomoMGMM14}.

\bibliographystyle{splncs04}
\bibliography{main}

\end{document}